\newtheorem{theorem}{Theorem}[section]
\newtheorem{proposition}[theorem]{Proposition}
\theoremstyle{definition}
\newtheorem{remark}[theorem]{Remark}
\begin{document}

%\begin{titlepage}

\title{Relativistic Collisions as Yang--Baxter maps}
\author{Theodoros E. Kouloukas 
\\
\\School of Mathematics, Statistics \& Actuarial Science, \\ University of Kent, UK}
\maketitle

\begin{abstract}
We prove that one-dimensional elastic relativistic collisions satisfy the set-theoretical Yang--Baxter equation. The corresponding collision 
maps are symplectic and admit a Lax representation. Furthermore, they can be considered as reductions  
of a higher dimensional integrable Yang--Baxter map on an invariant manifold.  In this framework, we study 
the integrability of transfer maps that represent particular periodic sequences of collisions.

\end{abstract}
%\end{titlepage}
%\newpage

\section{Introduction to Yang--Baxter maps}
Solutions of the set-theoretical Yang--Baxter equation \cite{Baxter,buch,Drin,skly88,ves2,wein,Yang}, as well as their applications have been extensively studied from various  
perspectives of mathematics and physics. We will use the short term ``Yang--Baxter maps''  for these solutions which was introduced in \cite{ves2}. Namely, for a set $\mathcal{X}$ 
a  {\it{Yang--Baxter (YB) map}} is a map $R: \mathcal{X} \times
\mathcal{X} \rightarrow \mathcal{X} \times \mathcal{X}$,
$R:(x,y)\mapsto (u(x,y),v(x,y))$, that satisfies the {\em
YB equation} 
%\begin{equation}\label{YBprop}
\begin{equation*}\label{YBprop}
 R_{23}\circ R_{13}\circ
R_{12}=R_{12}\circ R_{13}\circ R_{23}.
\end{equation*} Here, by $R_{ij}$
for $i,j=1,2,3$, we denote the action of the map $R$ on the $i$ and
$j$ factor of $\mathcal{X} \times \mathcal{X} \times \mathcal{X}$,
i.e. $R_{12}(x,y,z)=(u(x,y),v(x,y),z)$,
$R_{13}(x,y,z)=(u(x,z),y,v(x,z))$ and
$R_{23}(x,y,z)=(x,u(y,z),v(y,z))$. 
Moreover, $R$ is called quadrirational if both maps 
$u( \cdot ,y)$, $v(x, \cdot )$, for fixed $y$ and $x$ respectively, are birational isomorphisms  
of $\mathcal{X}$ to itself.

Parametric versions of YB maps are closely related to integrable quadrilateral equations which constitute discrete analogue of partial differential equations (see e.g. \cite{sokor,kp4,paptonlift,paptong,paptonves}). 
In these cases the YB property reflects the 3-dimensional (3D) consistency property of the corresponding equation.
Parametric YB maps \cite{ves2,ves3} involve two parameters in a parameter space $\mathcal{I}$ that can be considered as extra variables 
invariant under the map. Particularly, a \emph{parametric YB map} is a YB map  
 $R:(\mathcal{X} \times
\mathcal{I}) \times (\mathcal{X} \times \mathcal{I}) \mapsto
(\mathcal{X} \times \mathcal{I}) \times (\mathcal{X} \times
\mathcal{I})$, with
\begin{equation} \label{pYB}
R:((x,\alpha),(y,\beta))\mapsto((u,\alpha),(v,\beta))= ((u(x,\alpha,
y,\beta),\alpha),(v(x,\alpha, y,\beta),\beta)). 
\end{equation}
We usually
keep the parameters separate and denote the map (\ref{pYB}) just by 
$R_{\alpha,\beta}(x,y):\mathcal{X}\times \mathcal{X} \rightarrow
\mathcal{X} \times \mathcal{X}$.
A classification of parametric YB maps on $\mathbb{CP}^1 \times \mathbb{CP}^1$ has been achieved 
in \cite{ABS2,papclas}.
According to \cite{ves4}, a {\em Lax matrix} of the parametric YB
map (\ref{pYB}) is a matrix $L$ that depends on a point $x \in \mathcal{X}$,  
a parameter $\alpha \in \mathcal{I}$ and a spectral parameter $\zeta\in \mathbb{C}$, such
that
\begin{equation} \label{laxmat}
L(u,\alpha,\zeta)L(v,\beta,\zeta)=L(y,\beta,\zeta)L(x,\alpha,\zeta).
\end{equation}
Moreover, we call $L$ a {\em
strong Lax matrix}
if equation (\ref{laxmat}) is equivalent to 
$(u,v)=R_{\alpha,\beta}(x,y)$. 
On the other hand, 
if $u=u_{\alpha,\beta}(x,y), v=v_{\alpha,\beta}(x,y) $ satisfy
(\ref{laxmat}) for a matrix $L$  and the equation $$L(
\hat{x},\alpha,\zeta )L( \hat{y} ,\beta,\zeta )L(\hat{z}, \gamma,\zeta )= L(x
,\alpha,\zeta)L(y, \beta,\zeta)L(z, \gamma,\zeta)$$ implies that $\hat{x}=x, \
\hat{y}=y$ and $\hat{z}=z$ for every $x,y,z \in \mathcal{X}$, then
$R_{\alpha,\beta}(x,y)\mapsto(u,v)$ is a parametric YB map with Lax
matrix $L$ \cite{kp1,ves2}.

The integrability aspects of YB maps have been studied by Veselov
\cite{ves2,ves3}. It was shown that for any YB map there is a hierarchy of commuting
transfer maps. From the corresponding Lax representation of the original YB map a monodromy matrix is defined, whose spectrum is preserved under 
the transfer maps. In many cases of multidimensional YB maps with polynomial Lax matrices, r-matrix Poisson structures provide the right framework  to study the integrability of the transfer maps \cite{kp1,kp3}.  

\section{One-dimensional elastic collisions as YB maps} 
The one dimensional non-relativistic elastic collision of two particles with masses $m_1$ and $m_2$ is described 
by 
$$v_1'=\frac{v_1(m_1-m_2)+2 m_2 v_2}{m_1+m_2}, \  v_2'=\frac{v_2(m_2-m_1)+2 m_1 v_1}{m_1+m_2} ,$$
where $v_1$, $v_2$ are the initial velocities and $v_1'$, $v_2'$  the velocities after the collision. 
A simple calculation shows that the linear map 
$R^0_{m_1,m_2}(v_1,v_2)=(v_1',v_2')$ is a parametric YB map. In this case, the corresponding transfer maps are linear and their dynamical behavior quite simple. 

The relativistic case is much more interesting.  In this case, the conservation of relativistic energy and momentum is expressed as 
\begin{eqnarray}
m_1 c^2 \gamma(v_1)+m_2 c^2 \gamma(v_2) &=& m_1 c^2 \gamma(v_1')+m_2 c^2 \gamma(v_2') , \label{conen} \\ 
m_1 v_1 \gamma(v_1)+m_2 v_2 \gamma(v_2) &=& m_1 v_1' \gamma(v_1')+m_2 v_2 \gamma(v_2'),  \label{conmom}
\end{eqnarray}
where $ \gamma$ denotes the Lorentz factor $\gamma(v)=\frac{1}{\sqrt{1-v^2 / c^2}}$ (we will always assume $|v_i|<c$). The velocities $v_1', v_2'$ after collision are given in terms of the initial velocities $v_1,v_2$ by solving \eqref{conen}-\eqref{conmom}. From this solution (excluding the trivial 
solution  $v_1'=v_1, v_2'=v_2$) we define the {\it collision map}  
\begin{equation} \label{YB1}
\mathcal{R}_{m_1,m_2}:(v_1,v_2) \mapsto (v_1',v_2'). 
\end{equation} 
We are going to show that this map is a parametric YB map and we will describe an associated Lax representation. 

First, by setting 
\begin{equation} \label{transf}
x=\sqrt{\frac{c+v_1}{c-v_1}} ,  \ y=\sqrt{\frac{c+v_2}{c-v_2}} , \ u=\sqrt{\frac{c+v_1'}{c-v_1'}}  , \ v=\sqrt{\frac{c+v_2'}{c-v_2'}},  
\end{equation}
the equations \eqref{conen}-\eqref{conmom} become 
\begin{eqnarray}
m_1 \frac{x^2+1}{x}+m_2 \frac{y^2+1}{y} &=& m_1 \frac{u^2+1}{u}+m_2 \frac{v^2+1}{v}, \label{conen2} \\ 
m_1 \frac{x^2-1}{x}+m_2 \frac{y^2-1}{y} &=& m_1 \frac{u^2-1}{u}+m_2 \frac{v^2-1}{v}. \label{conmom2}
\end{eqnarray}
The latter system gives two solutions with respect to $u$ and $v$: the trivial solution $u=x$, $v=y$, which corresponds to the no-collision situation, and 
the after-collision solution 
\begin{equation} \label{uv}
u=y \left( \frac{m_1 x+m_2 y}{m_2 x+m_1 y} \right), \ v=x \left( \frac{m_1 x+m_2 y}{m_2 x+m_1 y} \right).
\end{equation}
 From the non-trivial solution \eqref{uv} we define the map 
${R}_{m_1,m_2}:(x,y) \mapsto (u,v)$.

\begin{proposition} \label{prop1}
The map 
\begin{equation} \label{YBab}
{R}_{\alpha,\beta}(x,y)=\left(y \left( \frac{\alpha x+\beta y}{\beta x+\alpha y}\right),x \left(\frac{\alpha x+\beta y}{\beta x+\alpha y} \right) \right)  
\end{equation} 
is a parametric YB map with strong Lax matrix 
\begin{equation} \label{Lax}
L(x,\alpha,\zeta)=\left(
\begin{array}{cc}
  \zeta & \alpha x \\
 \frac{\alpha}{x} &  \zeta
\end{array}
\right).
\end{equation}
Furthermore,  $R_{\alpha,\beta}$ is symplectic with respect to 
$\omega=\left( \frac{1}{x^2} -\frac{1}{y^2} \right) dx \wedge dy.$
\end{proposition}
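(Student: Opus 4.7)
The plan is to treat the Lax matrix (\ref{Lax}) as the principal tool: verifying the Lax equation (\ref{laxmat}) will simultaneously establish the strong Lax property, and the Yang--Baxter property will then follow from the uniqueness criterion recalled in the introduction. The symplectic statement is independent and will be handled by a direct pullback computation.

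First I would carry out the $2\times 2$ matrix multiplication on both sides of (\ref{laxmat}). Each side is a quadratic polynomial in $\zeta$ of the form $\zeta^{2}I + \zeta M + N$. Equating the $(1,2)$ entries (which live at order $\zeta$) yields $\alpha u + \beta v = \alpha x + \beta y$, and equating the constant terms of the $(1,1)$ entries yields $\alpha\beta\,u/v = \alpha\beta\,y/x$, i.e.\ $ux=vy$. The $(2,1)$ and $(2,2)$ entries produce no independent relations. Solving this pair of equations in $(u,v)$ returns precisely the non-trivial branch (\ref{uv}). A small but essential observation is that the trivial branch $(u,v)=(x,y)$ fails $ux=vy$ unless $x=y$; consequently (\ref{laxmat}) is equivalent to $(u,v) = R_{\alpha,\beta}(x,y)$, establishing the strong Lax property.

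For the Yang--Baxter property I invoke the criterion from the introduction, which reduces matters to showing that
$$L(\hat x,\alpha,\zeta)L(\hat y,\beta,\zeta)L(\hat z,\gamma,\zeta) = L(x,\alpha,\zeta)L(y,\beta,\zeta)L(z,\gamma,\zeta)$$
forces $(\hat x,\hat y,\hat z) = (x,y,z)$. Expanding the triple product entry-by-entry as a polynomial in $\zeta$, the off-diagonal entries at order $\zeta^{2}$ give $\alpha\hat x + \beta\hat y + \gamma\hat z = \alpha x + \beta y + \gamma z$ together with the corresponding reciprocal identity; the constant term of the $(1,2)$ entry gives $\hat x\hat z/\hat y = xz/y$; and the order-$\zeta$ terms of the diagonal entries provide two further symmetric conditions. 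Using $\hat y = \hat x\hat z\,y/(xz)$ to eliminate $\hat y$ and substituting into the remaining constraints produces a rational system whose only solution is $(\hat x,\hat z)=(x,z)$. I expect this algebraic uniqueness verification, though elementary, to be the most computationally involved step and the main obstacle of the proof.

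For the symplectic claim, factor $u = yf$, $v = xf$ with $f = (\alpha x+\beta y)/(\beta x+\alpha y)$. A short calculation yields
$$df = \frac{\alpha^{2}-\beta^{2}}{(\beta x+\alpha y)^{2}}\,(y\,dx - x\,dy),$$
which is a scalar multiple of $x\,dy - y\,dx$, so $(x\,dy - y\,dx)\wedge df = 0$. Expanding $du\wedge dv$, the cross terms collapse and one obtains $du\wedge dv = -f^{2}\,dx\wedge dy$. Substituting into $R_{\alpha,\beta}^{\ast}\omega = (1/u^{2}-1/v^{2})\,du\wedge dv$, the $f^{2}$ factors in $1/u^{2} - 1/v^{2} = (1/y^{2}-1/x^{2})/f^{2}$ cancel against $-f^{2}$ to leave $(1/x^{2}-1/y^{2})\,dx\wedge dy = \omega$, as required.
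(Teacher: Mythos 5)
Your proposal is correct and follows essentially the same route as the paper: solve the Lax equation for its unique (non-trivial) solution to get the strong Lax property, invoke the trifactorization-uniqueness criterion from the introduction for the YB property, and verify $R_{\alpha,\beta}^{*}\omega=\omega$ by a direct pullback using $du\wedge dv=-f^{2}\,dx\wedge dy$. The one step you only sketch, the uniqueness of the triple refactorization, is likewise merely asserted in the paper's proof, so your level of detail matches (indeed exceeds) the published argument.
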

\begin{proof}
The equation $L(u,\alpha,\zeta)L(v,\beta,\zeta)=L(y,\beta,\zeta)L(x,\alpha,\zeta)$ admits the unique solution 
$u=y \frac{\alpha x+\beta y}{\beta x+\alpha y}$, $v= x \frac{\alpha x+\beta y}{\beta x+\alpha y}$. In addition, 
$L(
\hat{x},\alpha )L( \hat{y} ,\beta )L(\hat{z}, \gamma )= L(x
,\alpha)L(y, \beta)L(z, \gamma)$ implies that $\hat{x}=x, \
\hat{y}=y$ and $\hat{z}=z$.  Finally, we can show directly that 
$R_{\alpha,\beta}^* \omega=\omega$.

\end{proof}

The parametric YB map \eqref{YBab} is reversible, i.e. $R^{21}_{\beta,\alpha}\circ R_{\alpha,\beta}=Id$ and  
it is an involution, 
$R_{\alpha,\beta}\circ R_{\alpha,\beta}=Id$. 
Furthermore, it is a quadrirational YB map of subclass $[2:2]$ which  
can be classified according to \cite{ABS2,papclas} as it is shown in remark  \ref{remclas}. 
The invariant symplectic form $\omega$ is a special case of an invariant $2n$-volume form of $2n$-dimensional transfer maps 
which is presented in Proposition \ref{propInt}.

The transformation \eqref{transf} indicates that  besides the real positive parameters $\alpha=m_1$ and $\beta=m_2$ (which correspond to the masses of the particles) we have to consider the variables $x$ and $y$ of the map $R_{\alpha,\beta}$ positive. Then the induced $u$ and $v$ from \eqref{uv} will be positive as expected. Also, we can write the collision map \eqref{YB1}, as 
\begin{equation} \label{colYB}
\mathcal{R}_{m_1,m_2}= (\phi^{-1} \times \phi^{-1}) \circ {R}_{m_1,m_2} \circ (\phi \times \phi), 
\end{equation} 
where $\phi:(-c,c) \rightarrow (0,+ \infty)$ is the bijection $\phi(v)=\sqrt{\frac{c+v}{c-v}}$ and 
$$(\phi \times \phi)(v_1,v_2):=(\phi(v_1),\phi(v_2)).$$ 
This transformation preserves the YB property, so  $\mathcal{R}_{m_1,m_2}$ is a YB map as well.
In addition, the strong Lax representation of $R_{\alpha,\beta}$ implies that the equation 
$$L(\phi(v_1'),m_1,\zeta)L(\phi(v_2'),m_2,\zeta)=L(\phi(v_2),m_2,\zeta)L(\phi(v_1),m_1,\zeta)$$  
is equivalent to $(\phi(v_1'),\phi(v_2'))=R_{m_1,m_2}(\phi(v_1),\phi(v_2))$, 
or 
$$(v_1',v_2')=(\phi^{-1} \times \phi^{-1}) \circ {R}_{m_1,m_2} \circ (\phi \times \phi)(v_1,v_2),$$
which shows that $L(\phi(v_1),m_1,\zeta)$ is a strong Lax matrix of  
$\mathcal{R}_{m_1,m_2}$. 
Finally, the invariant symplectic form $\omega$ of $R_{\alpha,\beta}$ implies the invariant symplectic form
$\tilde{\omega}=(\phi \times \phi)^{*} \omega$ of the map 
$\mathcal{R}_{m_1,m_2}$. 
We summarize our results in the following theorem.

\begin{theorem} \label{thmcolYB}
The collision map \eqref{YB1}, $\mathcal{R}_{m_1,m_2}:\mathcal{I} \times \mathcal{I} \rightarrow \mathcal{I} \times \mathcal{I}$, 
where $\mathcal{I}=(-c,c)$, coincides with the map \eqref{colYB} and it is a symplectic parametric YB map with strong Lax matrix 
$$\mathcal{L}(v_1,m_1,\zeta)=L(\phi(v_1),m_1,\zeta).$$
\end{theorem}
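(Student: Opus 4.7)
The plan is to deduce the theorem essentially as a corollary of Proposition \ref{prop1} together with the conjugation formula \eqref{colYB}. All the substantive content --- the Lax representation, the symplectic form, and the YB equation --- has already been established for $R_{\alpha,\beta}$, and what remains is to transport each property through the change of variables $\phi(v) = \sqrt{(c+v)/(c-v)}$.

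First, I would verify the factorization $\mathcal{R}_{m_1,m_2} = (\phi^{-1} \times \phi^{-1}) \circ R_{m_1,m_2} \circ (\phi \times \phi)$. This is essentially a restatement of the derivation preceding the theorem: substituting \eqref{transf} into the conservation laws \eqref{conen}--\eqref{conmom} produces the system \eqref{conen2}--\eqref{conmom2}, whose non-trivial solution \eqref{uv} is precisely the action of $R_{m_1,m_2}$ on $(\phi(v_1),\phi(v_2))$. The only thing to double-check is that $\phi$ really is a bijection from $(-c,c)$ onto $(0,+\infty)$, so that the inverse pass from $(u,v)$ back to $(v_1',v_2')$ is well defined.

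Next, I would transfer the YB property. Writing $\Phi = \phi \times \phi \times \phi$ acting componentwise on $\mathcal{I}^3$, each $\mathcal{R}_{ij}$ factors as $\Phi^{-1} \circ R_{ij} \circ \Phi$ on the appropriate pair of factors, since $\phi$ acts factorwise and the third factor is inert. The YB equation for $R_{\alpha,\beta}$ from Proposition \ref{prop1} then conjugates to the YB equation for $\mathcal{R}_{m_1,m_2}$. Similarly, for the strong Lax property, the equation $\mathcal{L}(v_1',m_1,\zeta)\mathcal{L}(v_2',m_2,\zeta) = \mathcal{L}(v_2,m_2,\zeta)\mathcal{L}(v_1,m_1,\zeta)$ is, by the definition $\mathcal{L}(v,m,\zeta) = L(\phi(v),m,\zeta)$, exactly the Lax relation for $L$ applied to $\phi$-images; the strong Lax property of $L$ (Proposition \ref{prop1}) therefore forces $(\phi(v_1'),\phi(v_2')) = R_{m_1,m_2}(\phi(v_1),\phi(v_2))$, and injectivity of $\phi$ yields $(v_1',v_2') = \mathcal{R}_{m_1,m_2}(v_1,v_2)$.

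Finally, for the symplectic structure, define $\tilde{\omega} = (\phi \times \phi)^* \omega$. Using the factorization $\mathcal{R}_{m_1,m_2} = (\phi \times \phi)^{-1} \circ R_{m_1,m_2} \circ (\phi \times \phi)$ and the invariance $R_{\alpha,\beta}^* \omega = \omega$, a direct computation gives
\[
\mathcal{R}_{m_1,m_2}^* \tilde{\omega} = \mathcal{R}_{m_1,m_2}^* (\phi \times \phi)^* \omega = (\phi \times \phi)^* R_{m_1,m_2}^* \omega = (\phi \times \phi)^* \omega = \tilde{\omega}.
\]
There is no real obstacle here --- the only mildly delicate point is bookkeeping the direction of pullbacks in the conjugation identity, which I would write out carefully to make sure signs and orientations are consistent. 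One could, if desired, also compute $\tilde{\omega}$ explicitly from $\phi$ to exhibit it as a concrete 2-form on $\mathcal{I} \times \mathcal{I}$, but this is not needed for the statement.
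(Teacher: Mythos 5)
Your proposal is correct and follows essentially the same route as the paper, which establishes the theorem by the discussion preceding it: the change of variables \eqref{transf} reducing the conservation laws to \eqref{conen2}--\eqref{conmom2}, the conjugation \eqref{colYB} transporting the YB property and the strong Lax relation (with $\mathcal{L}(v,m,\zeta)=L(\phi(v),m,\zeta)$), and the pullback $\tilde{\omega}=(\phi\times\phi)^{*}\omega$ giving the invariant symplectic form. No gaps; the pullback computation and the use of injectivity of $\phi$ for the strong Lax property are exactly the points the paper relies on.
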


The YB property of the collision map reflects the fact that the resulting velocities of the collision of three particles are independent 
of the ordering of the collisions. This can be generalized for more particle collisions taking into account the commutativity of the transfer maps as defined in \cite{ves2}.

Here, we restricted our analysis in the case where the two masses $m_1,m_2$ of the particles are conserved after the collision. 
In the most general situation of relativistic elastic collisions, one can consider to have four different masses instead of two. 
In this case, the conservation of relativistic energy and momentum leads to a correspondence rather than a map.The integrability aspects of this correspondence will be left for future investigation. 

\begin{remark} \label{remclas}

The quadrirational YB map \eqref{YBab},  $R_{\alpha,\beta}:(x,y)\mapsto (u,v)$, corresponds to 
the $H_{III}^{A}$ map of the classification list in \cite{papclas}  
under the transformation $x \mapsto \alpha \hat{x}$, $y \mapsto \beta \hat{y}$,  $u \mapsto \alpha \hat{u}$, $v \mapsto \beta \hat{v}$
and the reparametrization $\alpha^2=\hat{\alpha}, \ \beta^2=\hat{\beta}$, 
while the transformation 
$x \mapsto -\alpha x$, $y \mapsto \beta \hat{y}$, $u \mapsto \alpha \hat{u}$, $v \mapsto -\beta \hat{v}$ 
and the same reparametrization leads to the $F_{III}$ map of the classification list in  
\cite{ABS2}. 
On the other hand, YB \eqref{YBab} can be reduced   
from a (non-involutive) 4-parametric YB map 
$$\hat{R}_{(\alpha_1,\alpha_2),(\beta_1,\beta_2)}(x,y)=(y \ P,x \  P), \ \text{with} \ P=\frac{\beta_1 x+\alpha_2 y}{\alpha_1 x+\beta_2 y},$$ that was presented 
in \cite{kp4}, by considering  
$R_{\alpha,\beta}:={\hat{R}}_{(\beta,\beta),(\alpha,\alpha)}$. 
As it was shown in  \cite{kp4} following  \cite{shib},
the map $\hat{R}_{(\alpha_1,\alpha_2),(\beta_1,\beta_2)}$ 
is associated with the (non $D_4$-symmetric) 3D consistent quad-graph equation 
$$w_{k+1,l}(\beta_1 w_{k,l}+\alpha_2 w_{k+1,l+1})-w_{k,l+1}(\alpha_1 w_{k,l}+\beta_2 w_{k+1,l+1})=0.$$
By setting $\alpha_1=\alpha_2=\beta$ and $\beta_1=\beta_2=\alpha$ we derive the 3D consistent equation associated with the 
YB map \eqref{YBab}
$$w_{k+1,l}(\alpha w_{k,l}+\beta w_{k+1,l+1})-w_{k,l+1}(\beta w_{k,l}+\alpha w_{k+1,l+1})=0.$$
This is a discrete version of the potential modified KdV equation originally presented in \cite{NQC} and corresponds under 
a gauge tranformation to $H_3$ equation in \cite{ABS1} for $\delta=0$. The latter equation follows directly from the YB map \eqref{YBab}
by setting $x=w_{k+1,l}w_{k,l}$, $y=w_{k+1,l+1}w_{k+1,l}$, $u=w_{k+1,l+1}w_{k,l+1}$ and $v=w_{k,l+1}w_{k,l}$. 
 \end{remark}

%%%%%%%%%%%%%%%%%%%%%%%%%%%%%%%%%%%%%%%%%%%%%%%%%%%%%%%%%%%%%%%%%%%%%%%%%%%
\section{Transfer maps as periodic sequences of collisions and integrability}
For any YB map, we can define different families of multidimensional transfer maps and corresponding monodromy matrices from their Lax representation \cite{kp2,ves2,ves3}. 
Under some additional conditions the transfer maps turn out to be integrable. 

Here we will consider one family of transfer maps 
derived from the so-called {\it standard periodic staircase initial value problem} that was originally presented as a periodic initial-value problem for  integrable partial difference equations in \cite{PNC,QCPN}.  In this framework (see e.g. \cite{kp2}), we define the {\it transfer map} as the $2n$-dimensional map 
$$T_n:(x_1,x_2, \dots, x_n,y_1,y_2 \dots, y_n) \mapsto (x_1',x_2' \dots, x_n',y_2', y_3'\dots, y_n',y_1'),$$
where $(x_i',y_i')=R_{\alpha,\beta}(x_i,y_i)$, and the {\it $k$-transfer map} as the map 
$T_n^k:=\underbrace{T_n\circ \dots \circ T_n}_k$. 
In this case, the {\it monodromy matrix} is defined as 
$$M_n(x_1,\dots,x_n,y_1, \dots, y_n)= \prod_{i=0}^{n-1}L(y_{n-i},\beta,\zeta)L(x_{n-i},\alpha,\zeta),$$
where $L$ is the Lax matrix of the YB map $R_{\alpha,\beta}$ (the elements in the product are arranged from left to right). 
Now, we observe that $$L(y_1',\beta,\zeta) M_n(x_1,\dots,x_n,y_1, \dots, y_n)=M_n(T_n(x_1,\dots,x_n,y_1, \dots, y_n)) L(y_1',\beta,\zeta).$$
So, it follows directly that the transfer map preserves the spectrum of the corresponding monodromy matrix. 

In a more general setting we can consider the non-autonomous case with different parameters $\alpha_i, \beta_i$ and 
$(x_i',y_j')=R_{\alpha_i,\beta_j}(x_i,y_j)$.  
Then the corresponding
$n$-transfer map $T_n^n:(x_1, \dots, x_n,y_1, \dots, y_n) \mapsto (x_1^{(n)}, \dots, x_n^{(n)},y_1^{(n)}, \dots, y_n^{(n)})$ 
preserves the spectrum of the monodromy matrix 
$M_n(x_1,\dots,x_n,y_1, \dots, y_n)= \prod_{i=0}^{n-1}L(y_{n-i},\beta_{n-i},\zeta)L(x_{n-i},\alpha_{n-i},\zeta).$ 
However, next we will restrict in the autonomous case where $\alpha_i=\alpha$ and $\beta_i=\beta$.

Let us now consider the YB map \eqref{YBab}. Even though the map itself is an involution, the corresponding transfer maps are not trivial. 
We will show that for any $n$ there are always three integrals of the transfer map $T_n$ and an invariant volume form. 

\begin{proposition} \label{propInt}
Any transfer map $T_n$ of the parametric YB map \eqref{YBab} has three first integrals 
\begin{eqnarray*}
E_n &=&\sum_{i=1}^n ( \alpha( x_i+\frac{1}{x_i})+\beta (y_i+\frac{1}{y_i}) ), \\  
P_n &=&\sum_{i=1}^n ( \alpha( x_i-\frac{1}{x_i})+\beta (y_i-\frac{1}{y_i}) ), \\
H_n &=& \frac{x_1 \dots x_n}{y_1 \dots y_n}+\frac{y_1\dots y_n}{x_1 \dots x_n} 
\end{eqnarray*}
and preserves the 
$2n$-volume form 
$$\Omega_n=(\frac{1}{x_1^2 x_2^2 \dots x_n^2}-\frac{1}{y_1^2 y_2^2\dots y_n^2})
dx_1 \wedge dx_2  \dots  \wedge dx_n  \wedge dy_1  \wedge dy_2  \dots \wedge dy_n.$$
\end{proposition}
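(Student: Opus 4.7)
The plan is to factor the transfer map as $T_n = \sigma \circ S$, where $S(x_1,\dots,x_n,y_1,\dots,y_n) = (x_1',\dots,x_n',y_1',\dots,y_n')$ applies $R_{\alpha,\beta}$ site-wise and $\sigma$ is the cyclic permutation $(y_1,\dots,y_n)\mapsto(y_2,\dots,y_n,y_1)$. Because $(x_i',y_i')$ depends only on $(x_i,y_i)$, every computation under $S$ decouples into $n$ independent single-site calculations that follow from Proposition \ref{prop1} and the conservation laws \eqref{conen2}--\eqref{conmom2}, while $\sigma$ acts only by permutation and sign.

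For $E_n$ and $P_n$, the equations \eqref{conen2} and \eqref{conmom2} say precisely that each site-wise application of $R_{\alpha,\beta}$ preserves $\alpha(x_i+1/x_i)+\beta(y_i+1/y_i)$ and $\alpha(x_i-1/x_i)+\beta(y_i-1/y_i)$. Summing over $i$ gives $S^*E_n=E_n$ and $S^*P_n=P_n$, and since both quantities are symmetric in $\{y_j\}$ the cyclic shift leaves them fixed. For $H_n$, use the identity $x_i'/y_i'=y_i/x_i$ that is immediate from \eqref{uv}; taking products gives $\prod x_i'/\prod y_i'=\prod y_i/\prod x_i$, so $S$ sends the ratio $X/Y$ (with $X=\prod x_i$, $Y=\prod y_i$) to its reciprocal and therefore fixes $H_n=X/Y+Y/X$. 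As $\sigma$ does not affect the products, $T_n^*H_n=H_n$.

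The volume form requires more care because $\Omega_n$ is not the wedge of the site-wise symplectic forms $\omega_i=(1/x_i^2-1/y_i^2)\,dx_i\wedge dy_i$. Setting $K_i=(\alpha x_i+\beta y_i)/(\beta x_i+\alpha y_i)$, so that $x_i'=y_iK_i$ and $y_i'=x_iK_i$, Proposition \ref{prop1} combined with the direct identity $1/(x_i')^2-1/(y_i')^2=-K_i^{-2}(1/x_i^2-1/y_i^2)$ gives the site-wise Jacobian $\det(\partial(x_i',y_i')/\partial(x_i,y_i))=-K_i^2$. With $K=\prod K_i$, one also finds $\prod x_i'=KY$ and $\prod y_i'=KX$, so the coefficient of $\Omega_n$ transforms under $S$ as
\[
\frac{1}{\prod (x_i')^2}-\frac{1}{\prod (y_i')^2}=-\frac{1}{K^2}\left(\frac{1}{\prod x_i^2}-\frac{1}{\prod y_i^2}\right).
\]
Multiplying this scalar factor by the total Jacobian $\prod(-K_i^2)=(-1)^n K^2$ of $S$ on the top form yields $S^*\Omega_n=(-1)^{n+1}\Omega_n$, while the cyclic shift contributes $\sigma^*(dy_1\wedge\cdots\wedge dy_n)=(-1)^{n-1}(dy_1\wedge\cdots\wedge dy_n)$, so $T_n^*\Omega_n=S^*\sigma^*\Omega_n=(-1)^{2n}\Omega_n=\Omega_n$. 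The crux of the proof lies in this last step: three independent sign contributions (from the negative site-wise Jacobian, from the sign flip of the coefficient, and from the parity of the cyclic permutation) must conspire to cancel, and verifying this cancellation is the only genuinely nontrivial computation.
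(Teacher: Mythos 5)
Your proof is correct and follows essentially the same route as the paper: site-wise conservation from \eqref{conen2}--\eqref{conmom2} for $E_n,P_n$, the relation $x_i'/y_i'=y_i/x_i$ for $H_n$, and the site-wise Jacobian $-K_i^2$ together with careful sign bookkeeping over the cyclic shift for $\Omega_n$. The only organizational difference is that the paper packages the volume argument as a product of two anti-invariant quantities (the function $h_n$ and the form $\frac{1}{x_1\cdots x_n y_1\cdots y_n}\,dx_1\wedge\cdots\wedge dy_n$, each changing sign under $T_n$), whereas you pull back $\Omega_n$ directly through the factorization $T_n=\sigma\circ S$; the underlying computation is the same.
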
 

\begin{proof}
Let us denote $(X_1,X_2, \dots X_n, Y_1,Y_2, \dots Y_n):=T_n(x_1, \dots, x_n,y_1, \dots, y_n)$. That 
means that $X_i=x_i'$, for $i=1,\dots, n$, $Y_i=y'_{i+1}$, for $i=1,\dots, n-1$ and $Y_n=y '_1$,  
where $(x_i',y_i'):=R_{\alpha,\beta}(x_i,y_i)$. 

$H_n$ is derived from the invariance condition 
\begin{equation} \label{prinv}
\frac{x_i'}{y_i'}=\frac{y_i}{x_i}
\end{equation}
 of the YB map  \eqref{YBab}. 
From this invariant condition we derive  
$$\frac{X_1 \dots X_n}{Y_1 \dots Y_n}+\frac{Y_1\dots Y_n}{X_1 \dots X_n}=
\frac{x_1' \dots y_n'}{y_1' \dots y_n'}+\frac{y_1'\dots y_n'}{x_1' \dots x_n'}=
\frac{y_1\dots y_n}{x_1 \dots x_n}+\frac{x_1 \dots x_n}{y_1 \dots y_n}.$$ 
In a similar way we can show that $E_n$ and $P_n$, which correspond to the preservation of relativistic energy and momentum
\eqref{conen2}-\eqref{conmom2}, are also preserved by the map $T_n$.  

Furthermore, we have 
$$dx_i'\wedge dy_i'= -\left( \frac{\alpha x_i +\beta y_i}{\beta x_i+\alpha y_i} \right) ^2 dx_i \wedge dy_i=
- \frac{x_i'  y_i'}{x_i y_i}  dx_i \wedge dy_i$$
and 
$$\frac{1}{x_1' \dots  x_n' y_1' \dots y_n'}dx_1' \wedge dy_1' \dots \wedge dx_n'  \wedge dy_n'
=\frac{(-1)^n}{x_1 \dots  x_n y_1 \dots y_n}dx_1 \wedge dy_1  \dots \wedge dx_n  \wedge dy_n.$$
For any $n$, even or odd, we can rearrange the wedge product to derive 
\begin{equation} \label{prvVol}
\frac{1}{X_1 \dots  X_n Y_1 \dots Y_n}dX_1 \wedge  \dots  dX_n  \wedge dY_1 \wedge \dots  dY_n=
\frac{-1}{x_1 \dots  x_n y_1 \dots y_n}dx_1 \wedge \dots   dx_n  \wedge dy_1 \wedge \dots  dy_n.
\end{equation}
Now, we consider the function $$h_n(x_1,\dots, x_n,y_1, \dots y_n)=\frac{x_1 \dots x_n}{y_1 \dots y_n}-\frac{y_1\dots y_n}{x_1 \dots x_n}.$$
The invariant condition \eqref{prinv} implies that 
$$h_n(X_1,\dots, X_n,Y_1, \dots Y_n)=-h_n(x_1,\dots, x_n,y_1, \dots y_n).$$ 
Therefore, from \eqref{prvVol} we conclude that the $2n$-form 
$$-\frac{h_n(x_1,\dots, x_n,y_1, \dots y_n)}{x_1 \dots  x_n y_1 \dots y_n}dx_1 \wedge \dots  \wedge dx_n  \wedge dy_1 \wedge \dots \wedge dy_n,
$$ which coincides with $\Omega_n$, 
is invariant under $T_n$.
 
\end{proof}

\begin{remark} We can combine the integrals $E_n$ and $P_n$ to derive the linear integral 
$$\frac{1}{2}(E_n+P_n)=\sum_{i=1}^n ( \alpha x_i+\beta y_i ).$$
Also, \eqref{prvVol} indicates that $T_n^2:=T_n \circ T_n$ preserves the $2n$-volume form 
$$\frac{1}{x_1 \dots  x_n y_1 \dots y_n}dx_1 \wedge \dots   dx_n  \wedge dy_1 \wedge \dots  dy_n.$$
\end{remark}

More integrals are obtained from the spectrum of the monodromy matrix $M_n$. 
The integrals of $T_n$ give rise to integrals of the transfer map $\mathcal{T}_n$ that corresponds to the collision 
YB map \eqref{colYB}. 

%%%%%%%%%%%%%%%%%%%%%%%%%%%%%%%
\subsection{Examples of six and four particle periodic collisions}
As an example, we will examine the  autonomous case for $n=3$. In this case, the transfer map $T_3$ is the six-dimensional map
\begin{eqnarray*}
&&T_3(x_1,x_2,x_3,y_1,y_2,y_3) \label{T3}
\\ &&=(y_1 \frac{\alpha x_1 + \beta y_1}{\beta x_1+\alpha y_1},y_2 \frac{\alpha x_2 + \beta y_2}{\beta x_2+\alpha y_2},
y_3 \frac{\alpha x_3 + \beta y_3}{\beta x_3+\alpha y_3}, 
x_2 \frac{\alpha x_2 + \beta y_2}{\beta x_2+\alpha y_2}, x_3 \frac{\alpha x_3 + \beta y_3}{\beta x_3+\alpha y_3}, x_1 \frac{\alpha x_1 + \beta y_1}{\beta x_1+\alpha y_1}).  \nonumber
\end{eqnarray*}
The corresponding monodromy matrix is 
$$M_3(\mathbf{x},\mathbf{y})=L(y_3,\beta,\zeta) L(x_3,\alpha,\zeta)L(y_2,\beta,\zeta) L(x_2,\alpha,\zeta)L(y_1,\beta,\zeta) L(x_1,\alpha,\zeta),$$
with $(\mathbf{x},\mathbf{y})=(x_1,x_2,x_3,y_1,y_2,y_3)$, and 
 $$TrM_3(\mathbf{x},\mathbf{y})=2 \zeta^6+I_2(\mathbf{x},\mathbf{y})\zeta^4+I_1(\mathbf{x},\mathbf{y})\zeta^2
 +I_0(\mathbf{x},\mathbf{y}).$$ 
 $I_0$, $I_1$ and $I_2$ are first integrals of the map $T_3$ and we can verify directly that they are functionally independent, 
 i.e. the Jacobian matrix of the integrals has full rank.   
Additionally, we have the two extra integrals $E_3$ and $P_3$ associated with the relativistic energy and momentum and the integral 
$H_3$ which in this case coincides with $I_0$. 
 
 All of the integrals together are functionally dependent (the Jacobian matrix of all the integrals has rank four) but we can choose four of them, like 
 $I_0$, $I_1$, $E_3$ and $P_3$ that are functionally independent. This generically leads to a 2-dimensional common level set where the orbit of 
 $T_3$ lies.  
A projection of an orbit of the map $T_3$ on $\mathbb{R}^3$  can be seen in figure \ref{torus}. 

\begin{figure}
\centering
\includegraphics[scale=0.29]{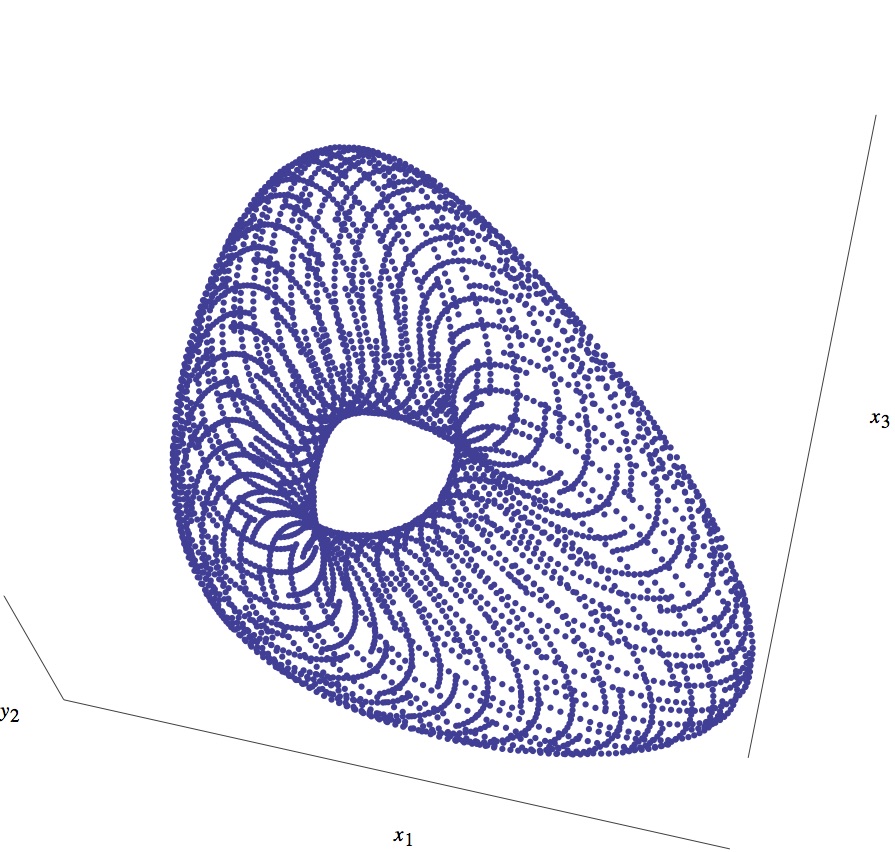}
\caption{Projection on $\mathbb{R}^3$ of an orbit of the map $T_3$ (for $\alpha=3$, $\beta=1$)}
 \label{torus}
\end{figure}

In addition, $T_3$ preserves the 6-volume form
$$\Omega_3=(\frac{1}{x_1^2 x_2^2 x_3^2}-\frac{1}{y_1^2 y_2^2 y_3^2})dx_1 \wedge dx_2  \wedge dx_3  \wedge dy_1  \wedge dy_2  \wedge dy_3.$$

The evolution of  $T_3$ represents a periodic sequence of collisions of six particles $p_1, \dots, p_6$.  Denoting  
by $p_i \leftrightarrow p_j$ the collision of the particles $p_i$ and $p_j$, initially we consider  the collisions 
$p_1 \leftrightarrow p_2$, $p_3 \leftrightarrow p_4$ and $p_5 \leftrightarrow p_6$. Subsequently, the collisions 
 $p_1  \leftrightarrow p_4$, $p_3  \leftrightarrow p_6$, $p_5  \leftrightarrow p_2$ and finally the collisions 
 $p_1 \leftrightarrow p_6$, $p_3  \leftrightarrow p_2$ and $p_5  \leftrightarrow p_4$. We proceed periodically with the same sequence of collisions. 
 Here we have considered that all the odd particles carry the same mass $\alpha$ and the even particles the mass $\beta$. 
 
 The case for $n=2$ is simpler.  According to proposition \ref{propInt}, the transfer map 
 $$T_2(x_1,x_2,y_1,y_2) =(y_1 \frac{\alpha x_1 + \beta y_1}{\beta x_1+\alpha y_1},y_2 \frac{\alpha x_2 + \beta y_2}{\beta x_2+\alpha y_2}, 
x_2 \frac{\alpha x_2 + \beta y_2}{\beta x_2+\alpha y_2}, x_1 \frac{\alpha x_1 + \beta y_1}{\beta x_1+\alpha y_1}),$$
has three functionally independent integrals, namely $E_2$, $P_2$ and $H_2$ 
and preserves the 4-volume form
$$\Omega_2=(\frac{1}{x_1^2 x_2^2 }-\frac{1}{y_1^2 y_2^2 })dx_1 \wedge dx_2   \wedge dy_1  \wedge dy_2 ,$$
therefore it is superintegrable\footnote{We call an n-dimensional map 
superintegrable if it admits $n-1$ functionally independent integrals and it is volume preserving.}.

The evolution of $T_2$ represents the periodic sequence of collisions of four particles 
$(p_1 \leftrightarrow p_2$, $p_3 \leftrightarrow p_4)$ and $(p_1 \leftrightarrow p_4, p_3 \leftrightarrow p_2)$. 
%%%%%%%%%%%%%%%%%%%%%%%%%%%%%%%%%%%%%%%%%%%%%%%%%%%%%%%%%%%%

\subsection{Transfer maps as reductions of higher-dimensional integrable maps on an invariant manifold}
In a similar way we can treat any transfer map which represent different sequences of collisions and derive integrals from the spectrum of the corresponding monodromy matrix. 
Nevertheless, in order to claim complete integrability in the Liouville sense any transfer map in question should be symplectic with respect to 
a suitable symplectic structure 
and the integrals must be  in involution.  Even if we haven't found such symplectic structure, we will show that the transfer maps of the YB map \eqref{YBab} 
can be derived from the restriction of symplectic maps of twice dimension on an invariant manifold.  

We consider the four-dimensional YB map $$\tilde{R}_{\alpha,\beta}:((x_1,x_2),(y_1,y_2)) \mapsto ((u_1,u_2),(v_1,v_2))$$ that is derived from 
the unique solution with respect to $u_i$, $v_i$ of the Lax equation 
$$\tilde{L}(u_1,u_2,\alpha,\zeta)\tilde{L}(v_1,v_2,\alpha,\zeta)=\tilde{L}(y_1,y_2,\beta,\zeta)\tilde{L}(x_1,x_2,\alpha,\zeta),$$
where

\begin{equation} \label{Ltilde}
\tilde{L}(x_1,x_2,\alpha,\zeta)=\left(
\begin{array}{cc}
  \frac{\zeta}{\alpha}+x_1 & x_2 \\
 \frac{1-x_1^2}{x_2} &  \frac{\zeta}{\alpha}-x_1
\end{array}
\right).
\end{equation}
 The Lax matrix \eqref{Ltilde} has been constructed by restriction on particular symplectic leaves of the Sklyanin bracket on polynomial matrices 
\cite{kp1,kp3}.  The YB map $\tilde{R}_{\alpha,\beta}$ is a symplectic quadrirational YB map (case $I$ in \cite{kp3}). 
The corresponding invariant Poisson structure is given by the Sklyanin bracket \cite{skly1,skly2}, which in this case is equivalent to  
$$ \{x_1,x_2 \}=\frac{x_2}{\alpha}, \ \{y_1,y_2 \}=\frac{y_2}{\beta}, \ \{x_i,y_j \}=0.$$
This Poisson bracket can be extended to $\mathbb{R}^{2n} \times \mathbb{R}^{2n}$ as 
\begin{equation} \label{extPois}
 \sum_{i=1}^n ( \frac{x_{2i}}{\alpha} \frac{\partial} {\partial x_{2i-1}}\wedge \frac{\partial} {\partial x_{2i}} + 
\frac{y_{2i}}{\beta} \frac{\partial} {\partial y_{2i-1}}\wedge \frac{\partial} {\partial y_{2i}} ).
\end{equation}
Since this Poisson structure is obtained from the Sklyanin bracket, we can show that every transfer map 
$\tilde{T}_n$ of $\tilde{R}_{\alpha,\beta}$ will be Poisson with respect to
\eqref{extPois} and the integrals 
that are derived from the trace of the corresponding monodromy matrix will be in involution. 

Now, we observe that by setting $x_1=y_1=0$ at $\tilde{R}_{\alpha,\beta}$, we derive $u_1=v_1=0$. Particularly,   
the manifold 
$\mathcal{M}= \{ (0,x,0,y) / x,y \in \mathbb{R} \}$ is invariant under the map $\tilde{R}_{\alpha,\beta}$. 
 In addition, the reduced Lax matrix $\tilde{L}(0,x,\alpha,\zeta)$ is equivalent to the Lax matrix \eqref{Lax}.  
 Therefore we conclude that $\tilde{R}_{\alpha,\beta}$ is reduced to the map \eqref{YBab} on  the invariant manifold $\mathcal{M}$.  
 Consequently, any transfer map $\tilde{T}_n$ of $\tilde{R}_{\alpha,\beta}$ can be reduced to the transfer map 
 $T_n$ of the map \eqref{YBab} by setting $x_{2i-1}=y_{2i-1}=0$, for $i=1,\dots, n$.  
 
 In this sense,  for any $n$, $T_n$ can be regarded as a 
 reduction of the integrable map $\tilde{T}_n$ on the invariant manifold 
 $\mathcal{M}_n= \{ \left( (0,x_1),\dots, (0,x_n),(0,y_1), \dots, (0,y_n) \right) / x_i,y_i \in \mathbb{R}_{>0} \}$ and solutions of $\tilde{T}_n$ are reduced to solutions of 
 $T_n$, and consequently, to solutions of the corresponding transfer map of the collision map \eqref{YB1}. 
 
%\begin{figure}
%\centering
%\includegraphics[scale=0.59]{Adler2p.eps}
%\caption{ Projection of $\mathbf{x}_{n+1}=T_i(\mathbf{x}_n),$  for  $i=1,2$ on $\mathbb{R}^3$}u
%\end{figure}

%%%%%%%%%%%%%%%%%%%%%%%%%%%%%%%%%%%%%%%%%%%%%%%%%%%%%%%%%%

\section{Conclusions}
We proved that the change of velocities of two particles after elastic relativistic collision satisfy the YB equation. The corresponding map is  
equivalent to a quadrirational YB map that admits a Lax representation, which consequently implies a Lax representation of the original collision map. 
The $2n$-dimensional transfer maps, which represent particular sequences of $n$-particle periodic collisions, preserve the spectrum of the monodromy matrix and a volume form. The 4-dimensional transfer map turned out to be superintegrable, while the orbits of the 6-dimensional 
map lie on 
a two dimensional torus.  

Furthermore, we showed that the collision map can be regarded as a reduction of an integrable $4$-dimensional YB map on an invariant manifold and an equivalent reduction can 
be considered for any $2n$-dimensional transfer map. In this sense, we can consider the corresponding transfer maps as integrable.
However, strictly speaking we have not proved the Liouville integrability of the transfer maps due to the lack of a suitable Poisson structure 
and this is an issue that we would like to further investigate in the future. In the same framework it will be interesting to study the 
integrability of the transfer dynamics as defined by Veselov in \cite{ves2,ves3}, as well as reflection maps \cite{CZ} associated with 
relativistic collisions and fixed boundary initial value problems.

Regarding higher dimensional relativistic collisions, the conservation of relativistic energy and momentum is not enough to describe 
the resulting velocities and some additional assumptions have to be considered (for example the scattering angle of the particles).  In this way, 
 extra parameters will be involved in the resulting map. We aim to study in the future higher-dimensional collision maps and examine whether they  satisfy the YB equation.  

We conclude by remarking that this perspective of collisions as YB maps provides a remarkable link back to  
the origins of the YB equation 
in the field of statistical mechanics. 

\section*{Acknowledgement}
The author would like to thank Profs A.N.W. Hone, V.G. Papageorgiou and Dr. P. Xenitidis for the discussion and their useful comments. 
This research was supported by EPSRC (Grant EP/M004333/1).

\end{document}